\newtheorem{Lemma}{Lemma}
\newtheorem{Theorem}[Lemma]{Theorem}
\begin{document}
\title{Sorting permutations with a transposition tree}
%
%
\author{ Bhadrachalam Chitturi \inst{1,3} \and
Indulekha T S \inst{2}}
\authorrunning{B. Chitturi et al.}
%
%
%
\institute{Department of Computer Science and Engineering,\\ Amrita Vishwa Vidyapeetham, Amritapuri Campus, India
 \and
 Department of Computer Science and Applications,\\ Amrita Vishwa Vidyapeetham, Amritapuri Campus, India
\and
Dept. of C S, University of Texas at Dallas, Richardson, TX
\email{bhadrachalam@am.amrita.edu, indulekhats@am.amrita.edu }
}
\maketitle              
\begin{abstract}
The set of all permutations with $n$ symbols is a symmetric group denoted by $S_n$.
A transposition tree, $T$, is a spanning tree over its $n$ vertices $V_T=${$1, 2, 3, \ldots n$} where the vertices are the positions of a permutation $\pi$ and $\pi$ is in $S_n$. 
$T$ is the operation and the edge set $E_T$ denotes the corresponding generator set. 
The goal is to sort a given permutation $\pi$ with $T$.
The number of generators of $E_T$ that suffices to sort any $\pi \in S_n$ constitutes an upper bound. 
It is an upper bound, on the diameter of the corresponding Cayley graph $\Gamma$ i.e. $diam(\Gamma)$. A precise upper bound equals $diam(\Gamma)$. Such bounds are known only for a few tress. Jerrum showed that computing $diam(\Gamma)$ is intractable in general if the number of generators is two or more whereas $T$ has $n-1$ generators.
For several operations computing a tight upper bound is of theoretical interest. Such bounds have applications in evolutionary biology to compute the evolutionary relatedness of species and parallel/distributed computing for latency estimation. 
The earliest algorithm computed an upper bound $f(\Gamma)$ in a $\Omega(n!)$ time by examining all $\pi$ in $S_n$. Subsequently, polynomial time algorithms were designed to compute upper bounds or their estimates. We design an upper bound $\delta^*$ whose cumulative value for all trees of a given size $n$ is shown to be the tightest  for $n \leq 15$. We show that $\delta^*$ is tightest known upper bound for full binary trees. \footnote{LNCS style}
\keywords{ Transposition trees, Cayley graphs, permutations, sorting, upper bound, diameter, greedy algorithms.}
\end{abstract}
\section{Introduction}

A \emph{transposition tree} $T=(V_T,E_T)$ is a spanning tree over $V_T$ where the cardinality of $V_T$ is $n$ and
$V_T=(1,2,\ldots n)$ \cite{Hey:2001}. 
The set of permutations with $n$ symbols forms a symmetric group denoted by $S_n$. 
Let $\pi \in S_n$ be the permutation that needs to be sorted. Let $E_T$ be the  set of edges of $T$. 
Let $\pi(i)$ and $V^i_T$ (or simply $i$) denote $i^{th}$ symbol of $\pi$ and the  $i^{th}$ vertex respectively. We employ the notation from \cite{Chitturi:2013,Ganesan:IAENG,A:K:1989,Hey:2001}.
The symbol $\pi(i)$ resides at vertex $i$ and is called as a \emph{marker}. If $\pi(i)=j$ then position $j$ is \emph{home} for the marker $\pi(i)$. Note that $j$ is the rank of $\pi(i)$ when $\pi$ is sorted ascending. 
An edge $(i,j) \in E_T$ signifies that $\pi(i)$ and $\pi(j)$ can be interchanged, i.e. swapped. A \emph{move} refers to one such interchange. 
$d_T(\pi)$ denotes the minimum number of moves to transform $\pi$ into the identity permutation $I_n=(1, 2, 3, 4, \ldots n)$ employing the generator set $E_T$. 
Since there is a path between every pair of vertices in a spanning tree, any two symbols of $T$ can be swapped by a sequence of moves. Thus, $T$ generates the symmetric group \cite{Smith:1999}.

Let $S$ be a group and $G$ be the associated set of generators. The Cayley graph $\Gamma$ of $S$ and $G$ is a graph having one vertex for each member of $S$ and an edge $(a,b) \rightarrow \exists g \in G$ such that $ag=b$. Given a transposition tree $T$, $E_T$ denotes the set of generators. A specific edge $e=(i,j)\in E_T$ is a specific generator. Application of some $e$ is a \emph{move}. 
 Let $\Gamma$ be the Cayley graph of $T$. An upper bound is a value $k$ such that any $\pi$ in $S_n$ can be sorted in at most $k$ moves. 
 Thus, any two vertices in  $\Gamma$ are at most $k$ edges apart.
The \emph{distance}  between $V^i_T$  and $V^j_T$  in $T$ is represented by $d_T(i,j)$. 
$diam(\Gamma)$, the diameter equals the maximum distance between any pair of vertices. 
An exact upper bound to sort any $ \pi \in S_n $ equals $diam(\Gamma)$. In this article we seek to compute an upper bound for $diam(\Gamma)$.

Certain Cayley graphs, were shown \cite{A:K:1989} to have a diameter that is sub-logarithmic 
in  the number of vertices, $n!$. For the prefix reversal operation $diam(\Gamma)$ is known to be linear; the best known upper bound is $18n/11$ \cite{Chitturi:etal:2009}. For prefix transposition the best known upper bound is $n - \log_{7/2}n$ \cite{Chitturi:2015}.
Thus, Cayley networks replaced hypercubes whose $diam(\Gamma)$ equals $\log (|V_T|)$) as the choice for interconnection networks and they have additional properties like vertex symmetry \cite{A:K:1989,Hey:1997,LV:etal:1993}. 
 Jerrum showed that the problem of identifying a minimum length sequence of generators when the number of generators is $\geq 2$ is intractable \cite{Jerrum:1985}. So, in general, given some $T$ on $n$ vertices, the computation of $diam(\Gamma)$ is NP-hard and efficient computation of tight upper bound for $diam(\Gamma)$ is sought. The research in the area of Cayley graphs has been active \cite{cdas2018,Kim201689,Konstantinova2016,chitturi2016adjacencies}. Recently, cube-connected circulants topology is shown to be better than some well-known network topologies \cite{mokhtar2016few}. In addition to permutations, such distance measures and their upper bounds have been extensively studied on strings \cite{Fertin:2009,Chitturi:2007,Chitturi:2008,Chitturi:2011}.

\section{Background}
Given a transposition tree  $T_n=(V_T,E_T)$, $\forall_i e_i \in E_T$ is  a generator, and the application of any one of the $n-1$ generators is a \emph {move}.
For a given operation say flip (prefix reversal)  on $ \pi \in S_n$, a prefix of length $k$ where $2 \leq k \leq n$ is reversed corresponding to $n-1$ generators $k=2\ldots n$.
 
Determining good upper bounds for sorting permutations under various operations is of interest. The computation of exact upper bound for sorting permutations with many operations is either intractable or its complexity is unknown \cite{Fertin:2009}. 
We state some results in the general area of sorting permutations with various generator sets. In 2009 Chitturi \emph{et al.}  \cite{Chitturi:etal:2009} improved the upper bound given by Gates and Papadimitriou \cite{GP:1979} for sorting permutations with prefix reversals. 
The problem of computing the diameter of the Cayley graph generated by cyclic adjacent transpositions was introduced by Jerrum \cite{Jerrum:1985}; for which Feng et al. \cite{Chitturi:AEMB:2010} prove a lower bound of $n^2/4$. 
An $O(n^3)$ amortized time algorithm to compute the optimum number of moves to sort any permutation with transposition operation was designed \cite{cdas2018}.

 Akers and Krishnamurthy computed an upper bound  $f(\Gamma)$ for the $diam(\Gamma)$ for transposition trees in $\Omega(n! n^2)$ time \cite{A:K:1989}. Given a transposition tree $T$, Ganesan~\cite{Ganesan:IAENG} computes a non-deterministic measure $\beta$, an estimate of the exact upper bound.  $\beta_{max}$ is the maximum among all values of $\beta$. Only $\beta_{max}$ that requires exponential time to compute is an upper bound. It is shown that $\beta_{max}\leq f(\Gamma)$.
Kraft \cite{Kraft} proposes three algorithms to identify  upper bounds on $diam(\Gamma)$ generated using transposition trees: $\alpha$ an exponential time algorithm, $\eta$ and $\xi$ a randomized algorithm. The method $\alpha$ tries to improve the bound of  \cite{A:K:1989} by identifying the minimal value at each step.

The terminology used in  \cite{Chitturi:2013,A:K:1989,Ganesan:IAENG,Hey:2001} is adopted here. 
Let $T$ be transposition tree on a vertex set $\{1,2,\ldots,n\}$. Let  $\pi$ be the permutation in $S_n$ that is to be sorted.
If  $\pi =(7,6,5,4,3,2,1)$ then $\pi \in S_7$ and in cycle representation $\pi =(1,7) (2,6)(3,5)(4)$ that is, $\pi $ has four cycles i.e. $\eta(\pi)=4$ \cite{A:K:1989}. Likewise, if $\pi =(6,5,4,3,2,1) = (1,6) (2,5) (3,4)$ then  $\eta(\pi)=3$.
A \emph {marker} $\pi(i)$ resides at position $i$. 
The move which swaps elements at positions $i$ and $j$ is denoted as $(i~j)$. 
If $(i~j)$ is executed on $\pi$ then we obtain $\pi (i~j)$, the result of application of $(i~j)$ to $\pi$.
Application of a sequence of generators so that $\pi(j)$ reaches its home is called \emph{homing} $\pi(j)$.

The  diameter of the Cayley graph is identified only for some transposition trees. 
If $T$ is  $K_{1,n-1}$, a star graph, the diameter is  $\lfloor 3(n-1)/2 \rfloor$ \cite{A:K:1989} for  bubble sort graph, it equals ${^n C_2}$. Recently, \cite{Uthan:2017} identified diameter for two  novel classes of trees $S_{m,k}$  and $M_k$.
 $S_{m,k}$ has one central vertex \emph{center} and $m$ spokes, each spoke is a path of length $k+1$ where all the paths share  \emph{center}. Thus, the regular star tree with $k-1$ leaves i.e. $S_k$ or $S_{1,k-1}$ is same as $S_{k-1, 1}$. Let $f(S_{m,k})$ be the diameter of the Cayley graph generated by $S_{m,k}$ then $f(S_{m,k}) = mk(2k+1)/2$ \cite{Uthan:2017}. Matchstick tree,$M_k$ was defined in \cite{Chitturi:2010}. It is an extension of the path graph with $k$ vertices where each vertex in the path graph has a corresponding leaf attached to it. The diameter of a Cayley graph generated with $M_k$ is $k^2+k -1$ for $k>2$ \cite{Uthan:2017}.

One can adopt brute force search to compute the diameter corresponding to $T_n$ for $n<6$. In the current article $n \ge 6$. In Section 3 we show the computation of $\delta^*$ in $O(n^2)$ time. In Section 4 we derive expression for  $\delta^*$  and compare it to the best existing bounds. 

\section{Algorithm $\delta^*$}
$Eccentricity$ of a vertex $u$ in $V_T$, denoted by $ecc(u)$, equals the maximum value of $d_T(u,i)$ for all $i$. The center of a tree is either a vertex (centered tree) or an edge (bicentred tree). A unique vertex with minimum eccentricity forms the center of a tree. An adjacent pair of vertices (corr. to an edge)  with minimum eccentricity form the center of a bicentered tree.
%
Let $\mu$ be the upper bound computed by an algorithm $A$ on a given tree $T$. $Cum_{\mu}(n)$ is the cumulative value of  upper bounds of all trees with $n$ vertices computed by $A$. Likewise $Cum_{\mu}(X)$ is the cumulative value of upper bounds computed by $A$ for all trees belonging to class $X$.

Chitturi designed an algorithm \emph{Algorithm S}  that identifies the set of all vertices that have maximum eccentricity i.e. $S$ in $T$ in linear time \cite{Chitturi:2013}. The general idea of the algorithms in \cite{Chitturi:2013} is to delete a set of leaves say $X$ and obtain an upper bound on the number of moves that suffice to home markers to all vertices in $X$. 
A $cluster$ $C$ in $S$ is the maximal subset of $S$ such that any $u, v$ in $C$ are less than $diam(T)$ apart \cite{Chitturi:2013}. Note that if all but one of the clusters is deleted then the diameter of the resultant tree decreases. Based on this idea, Algorithm $D'$ computes an upper bound $\delta'$ that works by deleting all clusters in $S$ except the largest cluster $C^*$ \cite{Chitturi:2013}. It employs a linear time algorithm $Fast_{NC}$ to identify all clusters \cite{Chitturi:2013}. 
Algorithm $D'$ has two versions, $D'_{v_1}$ and $D'_{v_2}$ which calculates $\delta'_{v_1}$ and $\delta'_{v_2}$ respectively. $D'_{v1}$ removes the entire $S$ if $|S-C^*| > (2/3)|S|$ while $D'_{v_2}$ removes the entire S if $|S-C^*| >= (2/3)|S|$. 
It was shown that $D'_{v_1}$ algorithm has the best value for $Cum_{\delta'}(n)$ and $Cum_{\delta'}(B)$ where $B$ denotes full binary trees. Here we improve upon Algorithm $D'$ and design a new algorithm Algorithm $\delta^*$ that computes the new upper bound $\delta^*$. Results show that $Cum_{\delta'}(n) > Cum_{\delta^*}(n)$ for $n \in 6 \ldots 15$. The following observations form the basis of Algorithm $\delta^*$.\\\\
\emph{Observation 1}: If $|S \setminus C^*| > |C^*|$ then at most $|C^*|$ markers need $diam(T)$ moves each to be homed.\\
\emph{Proof}: The proof follows from pigeon hole principle. Let the set of vertices that are being deleted be $X$. 
If $|X|>|C^*|$ then at most $|C^*|$ from $C^*$ can be homed to the vertices of $X$ the rest must be from elsewhere including within $X$.\\


\begin {algorithm}[h]
\caption{\textbf{Algorithm  $\delta^*$}\label{algorithm:dp}}
\begin{algorithmic}[1]
\\ $T$ and $V_T$ are the current tree and the corr. set of vertices.
\\ \textbf{S1.}
	\State $\delta^* \leftarrow 0$
	\If {$T$ is  a star graph}
	 \State $\delta^* \leftarrow \delta^* + \lfloor 3/2(|V_T|-1)\rfloor $ and terminate.
	\EndIf 
\\ \textbf{S2.} 
	\State Identify $S$ by executing \emph{Algorithm S}.

\\ \textbf{S3.}  
	\State Compute clusters for $S$ with \emph{Algorithm NC}.

\\ \textbf{S4.} 
		\State Identify $C= S-C^*$ where $C^*$ is the largest cluster and $distSum(C^*)$ is the least  (break ties arbitrarily). $V_{C}$: vertices of $C$. $~diam(T \setminus V_{C})< diam(T) $. 
\State \textbf{Case 1}:  $|C^*| \geq |S|/2~$   	
        \State $\delta^* \leftarrow \delta^*+|C|*diam(T)$
        $T \leftarrow T \setminus V_C$ 
\State \textbf{End Case 1}
\State \textbf{Case 2}:  $|C^*| < |S|/2~~~~~~~~$   //$|C|\geq|C^*|$
		\State $\delta^* \leftarrow \delta^*+|C^*|*diam(T);~~T \leftarrow T \setminus V_C$ 
			\If {$|C|-|C*|$ is even}	\State $\delta^* \leftarrow \delta^*+ (|C|- |C^*|)*(diam(T)-1/2) $; 
			\Else \State $\delta^* \leftarrow \delta^*+ (|C|-|C^*|)*(diam(T)-1/2) -1/2 $;  
			\EndIf
\State \textbf{End Case 2}
\If {$T$ is not a star graph} go to step S2.
	\Else
	\State $\delta^* \leftarrow \delta^* + \lfloor 3/2(|V_T|-1)\rfloor $ and terminate.
\EndIf		
\end{algorithmic}
\end{algorithm}

%
%

\begin{Theorem} \label{thm:dprime:alg}
$\delta^*$ is an upper bound for sorting permutations.
\end{Theorem}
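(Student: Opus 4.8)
The plan is to prove the statement by induction on the number of iterations that Algorithm $\delta^*$ performs on the input tree, which is the same as induction on $|V_T|$ once we observe that every iteration deletes at least one vertex. I would actually prove the slightly stronger bookkeeping claim that, run on any tree $T$ whose vertices carry the symbols $\{1,\dots,|V_T|\}$, the algorithm returns a value $\delta^*(T)$ with $\delta^*(T)\ge d_T(\pi)$ for every permutation $\pi$ on those symbols; taking the maximum over $\pi$ gives $\delta^*(T)\ge diam(\Gamma)$, which is the assertion of the theorem. For the base case $T$ is a star $K_{1,m}$ — this covers both the early exit S1 and the terminal case reached by the loop — and the algorithm returns $\lfloor 3m/2\rfloor$, which by \cite{A:K:1989} is exactly $diam(\Gamma)$ for the star, hence certainly an upper bound.

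For the inductive step assume $T$ is not a star. The algorithm computes $S$, its clusters, the largest cluster $C^*$ of least $distSum$, sets $C=S\setminus V_C$... more precisely $C=S\setminus C^*$ with vertex set $V_C$, adds some amount $m_C$ to $\delta^*$ (the quantity computed in Case 1 or Case 2 of S4), replaces $T$ by $T':=T\setminus V_C$, and recurses, so that the returned value factors as $\delta^*(T)=m_C+\delta^*(T')$. It therefore suffices to prove (i) that $T'$ is again a tree with strictly fewer vertices, so the induction hypothesis applies; and (ii) that every $\pi$ on $V_T$ can be sorted by first making at most $m_C$ moves that carry every marker whose home lies in $V_C$ to its home and never again touch $V_C$, leaving a sorting problem for the induced permutation $\sigma$ on $T'$. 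Granting these, $d_T(\pi)\le m_C+d_{T'}(\sigma)\le m_C+\delta^*(T')=\delta^*(T)$.

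For (i): the vertices of maximum eccentricity in a tree are leaves (an endpoint of a diametral path is a max-eccentricity vertex, and a max-eccentricity vertex of degree $\ge 2$ would produce a pair of vertices farther apart than $diam(T)$), so $V_C\subseteq S$ consists of leaves and $T'$ is a tree. Every diametral pair has its endpoints in two distinct clusters, so $S$ has at least two clusters whenever $|V_T|\ge 2$; hence $|C|\ge 1$ and $|V_{T'}|<|V_T|$. Moreover, deleting all clusters but $C^*$ kills every diametral pair and cannot raise an eccentricity, so $diam(T')<diam(T)$ — this is exactly the $diam(T\setminus V_C)<diam(T)$ invariant asserted in S4, and it is what forces the loop to terminate at a star. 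Claim (ii) is the standard ``freeze the sorted leaves'' reduction: once the markers destined for $V_C$ sit at home, no generator incident to $V_C$ is ever needed again, so the residual task is genuinely sorting $\sigma$ on $T'$ with generator set $E_{T'}$.

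The crux — and the step I expect to be the main obstacle — is bounding $m_C$: showing the $|C|$ markers destined for $V_C$ can be homed, and their leaves frozen, within the charged amount. Case 1 ($|C^*|\ge|S|/2$, charge $|C|\cdot diam(T)$) is routine: home the markers one at a time; the marker for a leaf $v$ lies at distance at most $ecc_T(v)=diam(T)$, route it there, delete $v$, and repeat, with eccentricities only shrinking. Case 2 ($|C^*|<|S|/2$) is the delicate one, since the charge works out to $|C|\cdot diam(T)-\lceil(|C|-|C^*|)/2\rceil$, so one must actually save $\lceil(|C|-|C^*|)/2\rceil$ moves over the naive estimate. Here one invokes Observation 1: by pigeonhole at most $|C^*|$ of these markers can require the full $diam(T)$ moves, and the remaining $|C|-|C^*|$ need at most $diam(T)-1$ each. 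The subtlety is that homing one marker can push another away from its home, so one cannot just charge $diam(T)-1$ to each ``near'' marker individually; instead one homes them two at a time along overlapping paths so that each pair costs at most $2\,diam(T)-1$, which yields the factor-$\tfrac12$ saving, with the explicit $-\tfrac12$ adjustment absorbing the leftover marker when $|C|-|C^*|$ is odd. Making this pairing argument rigorous — fixing the order of homing, tracking how intermediate swaps displace the not-yet-homed markers, and checking the bound survives in every intermediate subtree — is the technical heart of the proof; the rest is the induction above.
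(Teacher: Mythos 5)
Your overall strategy is the same as the paper's: an induction over the iterations of the algorithm, with each iteration's charge justified by (a) the pigeonhole bound of Observation 1 limiting to $|C^*|$ the number of markers that must travel the full $diam(T)$, and (b) an amortized pairing argument giving $diam(T)-\tfrac12$ per marker for those homed within the deleted set. The scaffolding you add (the star base case, the ``freeze the homed leaves'' reduction, the check that $V_C$ consists of leaves so the residual graph is a tree of strictly smaller diameter) is correct and is largely left implicit in the paper, which states the per-iteration charge directly and argues only about its worst case.

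The genuine gap is exactly the step you flag and defer: the pairing bound. The paper does not prove it either, but closes the argument by invoking Lemma~5 of \cite{Chitturi:2013}, which states that when a set $Y$ of deleted leaves must be homed among themselves the worst-case dependency structure is a perfect matching of mutual swaps, each pair costing $2\,diam(T)-1$; without that lemma (or a proof of it) your induction does not go through. Moreover your heuristic for the odd case points the wrong way: if $|C|-|C^*|$ is odd and you pair up all but one marker and charge the leftover $diam(T)$ individually, the total is $(|C|-|C^*|)(diam(T)-\tfrac12)+\tfrac12$, i.e.\ \emph{larger} than the even-case rate, whereas the algorithm subtracts an additional $\tfrac12$. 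The actual source of the odd-case saving is that an odd dependency cycle of length $k$ costs $k\,diam(T)-(k-1)$, e.g.\ a $3$-cycle costs $3(diam(T)-\tfrac12)-\tfrac12$; this is again content of the cited Lemma~5, not something a leftover-singleton argument can deliver. So: right architecture, matching the paper, but the central amortization lemma (including its odd-parity refinement) is asserted rather than proved, and your sketch of it would, as written, yield a bound weaker than the one the algorithm charges.
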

\begin{proof} 
The set of vertices that is to be deleted, i.e. $S \setminus C$ where $C$ is one of the largest clusters be called by $X$. $X$ can be union of several clusters and $X \subset S$ where $S$ is the set of vertices in $T$ with greatest eccentricity \cite{Chitturi:2013}. Recall that deletion implies that markers are being homed to the vertices in $X$. We try to obtain an upper bound on the cost to home all markers to $X$. The scenario yields Case 1) and Case 2). The other scenarios clearly yield a lower value as the markers are less that $diam(t)$ apart from their respective homes (destinations). 
Case 1): All markers homed to vertices in $X$ are from $X$. Case 2): Markers homed to vertices in $X$ are from any vertices of $S$.\\
Case 1):  
Lemma 5 of \cite{Chitturi:2013} gives an upper bound of $|X| (diam(T)-1/2)$ for Case 1). An additional half move is subtracted when $|X|$ is odd. \\
Case 2):  Let $|X|$ be $x$ and let $|C|$ be $c$. Let $i$ vertices from $C$ be homed to corresponding $i$ vertices in $X$. The associated cost is $i diam(T)$ moves. It follows that $x-i$ vertices from $X$ are homed among themselves where the upper bound for the associated cost is $(x-i) (diam(T)-1/2)$ due to Lemma 5 of \cite{Chitturi:2013}. When $x-i$ is odd then cost reduces by $1/2$.
Thus, the total cost is  $(x-i) (diam(T)-1/2) + i.diam(T)$ which maximizes when $i$ is maximized. That is, $i=c$ yielding $(x-c) (diam(T)-1/2) + c.diam(T)$.  When $x-c$ is odd then cost reduces by $1/2$. 
Further, consider the scenario where $u \in C$ is to be homed to $v \in X$ and $v \in X$ is to be homed to $w \in X$. Note that homing $u$ first moves $v$ one edge closer to its home. So, it requires at most $diam(T)-1$ moves. Thus, this scenario does not yield the worst case. Further, if there are dependencies such as the home of $a$ is $b$, the home of $b$ is $c$ etc. then the dependency that yields the worst case is shown to be mutual swap of pairs of markers; that is, the home of $a$ is $b$ and the home of $b$ is $a$ (Lemma 5 of \cite{Chitturi:2013}). So, given a subset $Y$ of $S$ where the vertices of $Y$ are to be homed within $Y$, the maximum number of pairs of $Y$ are swapped employing their corresponding sequence of moves. Thus, the theorem follows.
\end{proof}

\section{Results}
The cumulative sum of all upper bound values for all trees with up to 10 and 15 vertices was recorded for all the existing algorithms in \cite{Chitturi:2013} and \cite{Uthan:2017} respectively. $\delta'$ yielded the minimum value \cite{Chitturi:2013,Uthan:2017}. Our results show that $\delta^*$ yields a smaller value than $\delta'$ for the same. We obtained all non-isomorphic trees with a  given number of vertices from sagemath.org \cite{Sage}. 
The results are tabulated in Table 1.
We theoretically show that for a full binary tree $\delta^*$ indeed is tighter than $\delta'$. In \cite{Chitturi:2013} $\delta'$ is shown to be deterministic (the choices of deletion do not alter the value of the measure). We show that $\delta^*$ is also deterministic by proving that various sequences of deletions of vertices leads to isomorphic trees. So, the comparison is valid.
Table 2 shows the corresponding execution results.
\begin{table}[h!]
	\caption{Comparison of $\delta^*$ with other methods for all trees with $n$ vertices.}
\begin{center}
\begin{tabular}{|c|c|c|c|}
\hline
 No:of nodes & $Cum_{\delta'_{v1}}$ & $Cum_{\delta'_{v2}}$ & $Cum_{\delta^*}$\\ 
 \hline
 6 & 63 & 63 & 63 \\  
 \hline
 7 & 154 & 153 & 153 \\  
 \hline
 8 & 409 & 407 & 407 \\  
 \hline
 9 & 1032 & 1028 & 1027 \\  
 \hline
 10 & 2819 & 2809 & 2805 \\  
 \hline
 11 & 7401 & 7376 & 7361 \\  
 \hline
 12 & 20277 & 20222 & 20175 \\  
 \hline
 13 & 50032 & 49931 & 49820 \\  
 \hline
 14 & 152585 & 152285 & 151855 \\  
 \hline
 15 & 212841 & 212532 & 212217 \\  
 \hline
 \end{tabular}
 \end{center}
 \end{table}
 %
 %
 \begin{table}[h!]
	\caption{Comparison of $\delta^*$ with  $\delta'$ on a full binary tree with depth $d$.}
 \begin{center}
\begin{tabular}{ |c|c|c|c|c|c| } 
 \hline
 No:of nodes & No:of leaves & Depth & $\delta'v1$ & $\delta'v2$ & $\delta^*$ \\
 \hline
 3 & 2 & 1 & 3 & 3 & 3 \\ 
 \hline
 7 & 4 & 2 & 17 & 17 & 15\\ 
 \hline
 15 & 8 & 3 & 58 & 58 & 55\\ 
 \hline
 31 & 16 & 4 & 171 & 172 & 167 \\ 
 \hline
 63 & 32 & 5 & 460 & 461 &  453\\ 
 \hline
 127 & 64 & 6 & 1165 & 1168 & 1153 \\ 
 \hline
 255 & 128 & 7 & 2830 & 2833 & 2807 \\ 
 \hline
\end{tabular}
\end{center}
\end{table}
\newline Algorithm $\delta^*$  ensures that in every iteration, the maximum cost required to home a node is determined by the availability of room for homing that node. This is the qualitative improvement of this article.
Let $B(d)$ be a full binary tree of depth $d$ with levels $(1,2,3,\ldots d+1)$ where root is at level 1. Let $m$ be the number of leaf nodes in $B(d)$.
When the algorithm is run on $B(d)$ for $(2d-2)$ iterations then $B(d)$ will be transformed into a star graph(whose centre node has degree 3, i.e. $K_{1,3}$). 
In the first iteration, the set $S$ contains all leaf nodes and forms two clusters. One cluster contains all the leaf nodes of the left subtree and the other contains all leaf nodes in the right subtree. 
Since the distance sums of both clusters are same, any one of the two clusters can be considered as the largest cluster and the other can be deleted. 
Let the algorithm remove $m/2$ nodes of the right cluster each at a cost of $2d$ ($2d$ is the initial diameter). 
If the left cluster is chosen for deletion then the resultant tree is isomorphic to the resultant tree in the previous case and the cost is identical.
 In the second iteration, two clusters are formed. One cluster $C_1$ with $m/2$ leaf nodes of left subtree and another $C_2$  with $m/4$ leaf nodes of right subtree. The smaller of the clusters, i.e. $C_2$ is deleted. 
 So, in the second iteration $m/4$ nodes are removed each at a cost of $2d-1$(after the first iteration, the diameter is reduced by $1$). After two iterations, nodes in the levels $d+1$ and $d$ of the right subtree of $B(d)$ are removed. Let us call the new tree as $T(d)$. So, $\delta^*= A + B+ C$ where,\\
$A = $ Cost to convert $B(d)$ to $T(d)$ (first 2 iterations)\\
$B = $ Total cost for the next $2d-4$ iterations which results in $K_{1,3}$\\
$C = $ $\delta^*(K_{1,3})$\\
\textbf{A:} The total cost to convert $B(d)$ to $T(d)=(m/2)2d + (m/4)(2d-1) = m(6d-1)/4$.\\
\textbf{B:} It takes $(2d-4)$ iterations, i.e. (3  \ldots $(2d-2)$) to transform $T(d)$ to $K_{1,3}$. 
It consists of$(d-2)$ odd iterations and $(d-2)$ even iterations. 
In every odd iteration, the current tree $T$ with the current center $c$ can be descried as follows.  
There is an edge from $c$ to the center of the tree of the previous iteration that we call as \emph{upward edge}. If we imagine that the subtree connected through upward edge does not exist then we obtain a binary tree. Our terminology of left and right subtree is based on such presumption. Three clusters $C_1,C_2$ and $C_3$ are formed 
where $C_1$ and $C_2$ are the the left and right sutrees of the current tree and $C_3$ is the subtree that is connected to the center through upward edge.
In the beginning of an odd iteration, the centre is a single node (since the diameter is even). 
Both $C_1$ and $C_2$ contain same number of nodes and are of maximum size, any one of these clusters(along with the third cluster $C_3$) will be chosen for deletion. Let us assume that $C_2$ and $C_3$ are chosen for deletion. 
The nodes in $C_2$ will be deleted at a cost of $\textit {current diameter}$ per node and 
the nodes of $C_3$ will be deleted at a cost of $\textit{current diameter}-1/2$(because $|C_1|-|C_2|=0$). 
The trees that are obtained either by deleting $C_1$ and $C_3$ or by deleting $C_2$ and $C_3$ are isomorphic. 
It is shown in Figure 1.\\
\begin{figure}[h!]
	\centering
	\includegraphics[width=0.75\textwidth]{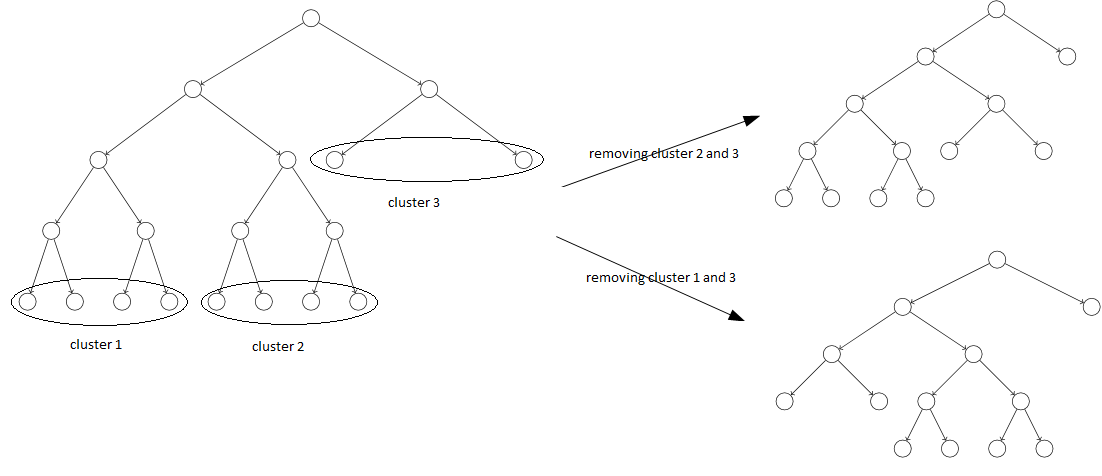}
	\caption{Removal of left or right cluster}
\end{figure}		
After the $d^{th}$ iteration, the diameter of the tree will be reduced to $d$ so that the entire right subtree is deleted(or the left, depends on the choice of the clusters, when there is a tie). In every odd iteration after this, all the three clusters will contain same number of nodes. Irrespective of the clusters chosen for deletion, the algorithm will generate the same(isomorphic) tree. Refer Figures 2 to 4. 
Every even iteration will generate exactly two clusters $C_1$ and $C_2$ with different sizes. Let us assume that $C_2$ is the smallest cluster and hence the one chosen for deletion.                                                                                               
\begin{figure}[h!]
	\centering
	\includegraphics[width=0.75\textwidth]{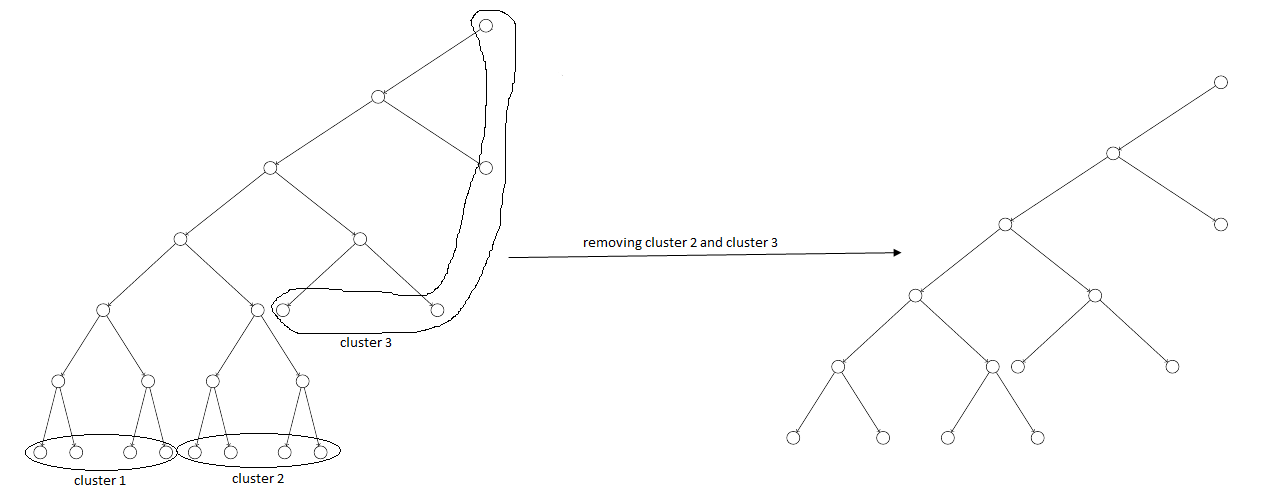}
	\caption{First possibility}
\end{figure}
\begin{figure}[h!]
	\centering
	\includegraphics[width=0.75\textwidth]{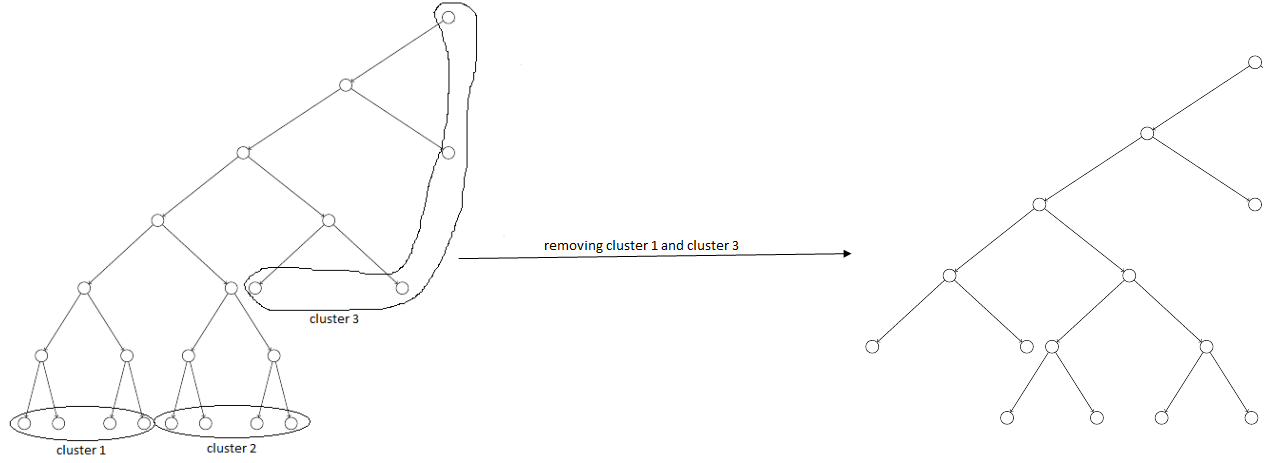}
	\caption{Second possibility}
\end{figure}
\begin{figure}[h!]
	\centering
	\includegraphics[width=0.75\textwidth]{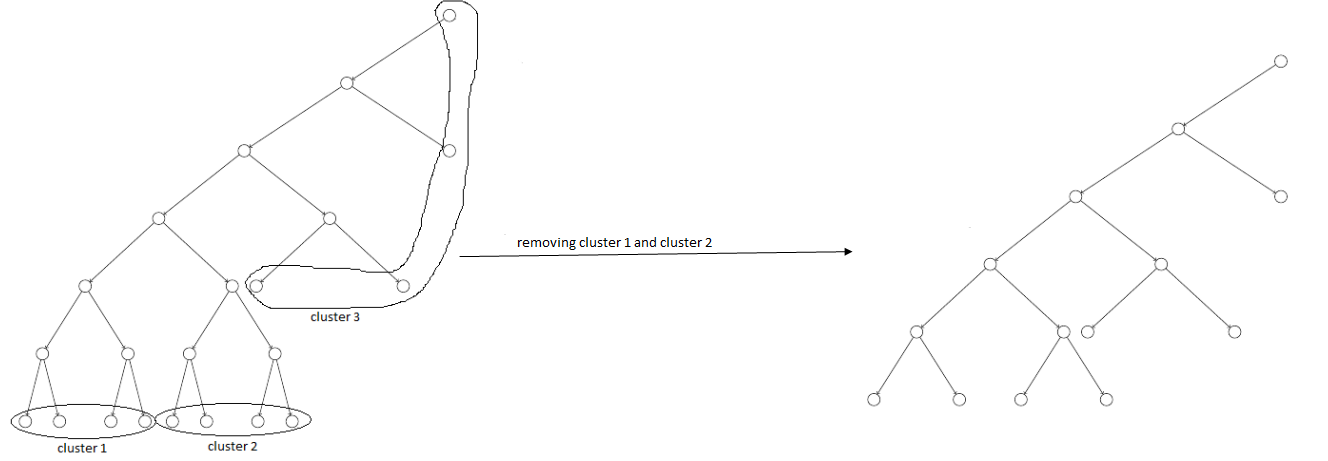}
	\caption{Third possibility}
\end{figure}
\\Table 3 and Table 4 show the number of nodes removed and the removal cost per node in various iterations of phase B. The removed nodes are categorized into two groups based on the removal cost per node.
\begin{table}[h!]
	\caption{Removal cost per node in the first $d$ iterations}
	\centering
	\begin{tabular}{c c c c c c}
		\hline
		Iteration & $|C_2|$ & Cost per node in $C_2$ & $|C_3|$  & Cost per node in $C_3$ \\
		\hline
		3 & $m/2^2$ & 2d-2  & $ m/2^3$ & $2d-2-1/2$ \\
		4 & $3m/2^4$ & 2d-3 & - & - \\
		5 & $m/2^3$ & 2d-4  & $ 3m/2^5$ & $2d-4-1/2$ \\
		6 & $7m/2^6$ & 2d-5 & - & - \\
		7 & $m/2^4$ & 2d-6  & $7m/2^7$ & $2d-6-1/2$\\
		8 & $15m/2^8$ & 2d-7& - & - \\
		... & ... & ... & ... & ...\\
		... & ... & ... & ... & ...\\                                                                      
		d &$ (2^{d/2}-1)m/2^d$ & $(d+1)$ & - & - &(for even $d$)\\
		d &$m/2^{(d+1)/2} $ & $(d+1)$ & $(2^{d/2}-1)m/2^d$ & $d-1/2$ &(for odd $d$)
	\end{tabular}
\end{table}\\
\begin{table}[h!]
	\caption{Removal cost per node in the next $d-2$ iterations}
	\centering
	\begin{tabular}{c c c c c c}
		\hline
		Iteration & $|C_2|$ & Cost per node in $C_2$ & $|C_3|$  & Cost per node in $C_3$ \\
		\hline
		... & ... & ... & ... & ...\\
		... & ... & ... & ... & ...\\
		$(2d-9)$  & 16 & 10 & 16 & 10-1/2\\
		$(2d-8)$  & 16 & 9 & - & -\\
		$(2d-7)$  & 8 & 8 &8 & 8-1/2\\
		$(2d-6)$  & 8 & 7 & - & -\\
		$(2d-5)$  & 4 & 6 & 4 & 6-1/2\\
		$(2d-4)$  & 4 & 5 & - & -\\
		$(2d-3)$  & 2 & 4 & 2 & 4-1/2\\
		$(2d-2)$  & 2 & 3 & - & -\\
	\end{tabular}
\end{table}\\
Thus, the total cost $B$ is the sum of three series $P$, $Q$, and $R$, where\\
$P = \dfrac{m(2d-2)}{2^2}+\dfrac{m(2d-4)}{2^3}+\dfrac{m(2d-6)}{2^4}+\ldots+\dfrac{m(2d-(2d-4)}{2^{d-1}}$\\
$Q = \dfrac{m(2d-2-1/2)}{2^3}+\dfrac{3m(2d-4-1/2)}{2^5}+\dfrac{7m(2d-6-1/2)}{2^7}+\ldots+\dfrac{(2^{d-2}-1)m(4-1/2)}{2^{2d-3}} $\\
$R = \dfrac{3m(2d-3)}{2^4}+\dfrac{7m(2d-5)}{2^6}+\dfrac{15m(2d-7)}{2^8   }+\ldots+\dfrac{2^{d-2}-1m(3)}{2^{2d-2}}$ \\
\textbf{C:} The cost required for the final star graph is $\left \lfloor \dfrac{3}{2}(4-1) \right \rfloor$ which is $4$.\\

\subsection{Comparing $\delta^*$ with $\delta'v_1$ and $\delta'v_2 $}
The improvement occurs when there are more number of nodes to be removed than that of the nodes in the largest cluster chosen. In every iteration, the largest cluster chosen by both $\delta'v_1$ and $\delta^*$ contain the same set of nodes.
Let $C$ is the largest cluster chosen. Since $|C| <=\dfrac{1}{2}|S| < \dfrac{2}{3}|S|$, both $\delta'v_1$ and $\delta^*$ will delete the set $S-C$ and generate the same trees(isomorphic) after every iteration. 
$\delta'v_1$ deletes each node at a cost of current diameter, while $\delta^*$ deletes some of them in lesser cost. In every odd iteration in phase B, there is an improvement of $0.5$ per node in the second cluster deleted.
The improvement of $\delta^*$ over $\delta'v_1$ is as follows:\\
$\delta^* = \delta'_{v_1} -\dfrac{(m+1)}{6}-1-\dfrac{1}{2} $ ~~~~//if 'd' is odd\\
$\delta^* = \delta'_{v_1} -\dfrac{(m+2)}{6}-1 $ ~~~~~~~~~//if 'd' is even\\

\begin{table}[h!]
	\caption{Improvement of $\delta^*$ over $\delta'v_1$ }
	\centering
	\begin{tabular}{c c c c c c}
		\hline 
		Iteration & Improvement \\
		\hline
		3 & $ m/2^4$ \\                                                      
		5 & $ 3m/2^6$ \\
		7 & $ 7m/2^8$ \\
		... & ... \\
		... & ... \\
		$(2d-9)$  & 8\\
		$(2d-7)$  & 4\\
		$(2d-5)$  & 2\\
		$(2d-3)$  & 1\\
	\end{tabular}
\end{table}
The behaviour of $\delta'v_2$ will be different from $\delta'v_1$, when $|C^*|\leq \dfrac{1}{2}|S - C^*|$.
In a full binary tree, from $(d+2)^{nd}$ or $(d+1)^{st}$ iteration onwards,(for odd and even depth trees respectively) the above condition holds. From that iteration onwards, the nodes removed by $i^{th}$ and $(i+1)^{th}$ iteration of $\delta'v_1$ will be removed by $\delta'v_2$ in a single iteration with lesser cost. This improvement is same as that of $\delta^*$ in those iterations.\\
Recall that either 3 or 2 clusters will be formed in every odd and even iteration respectively. Let the clusters in the $i^{th}$ odd iteration be $C_{1,i}$,$C_{2,i}$,and $C_{3,i}$, and that in the $(i+1)^{th}$ even iteration be $C_{1,i+1}$ and $C_{2,i+1}$. As discussed before, from this $i^{th}$ iteration onwards, all the three clusters formed are of same size.
In the $i^{th}$ iteration, $S = |C_{1,i}| + |C_{2,i}| + |C_{3,i}|$. Any of the clusters can be chosen as $C^*$. Then, $|C| = 2|S|/3$. So, $\delta'v_2$ will delete the entire $S$, with a cost of $(current diameter - 1/2)$ per node. Let $k$ denotes the current diameter. Then, the cost in $i^{th}$ iteration of $\delta'v_2$ is 
$cost_{i, \delta'v_1} = (k-1/2)(|C_{1,i}| + |C_{2,i}| + |C_{3,i}|) = 3|C_{1,i}|(k-1/2)$.\\
In the corresponding iteration, $\delta^*$ will remove only two clusters, say $C_{2,i}$,and $C_{3,i}$.So, the cost in this iteration is, 
$cost_{i, \delta*} = k(|C_{2,i}|) + (k-1/2)(|C_{3,i}|)$.
In the next iteration of $\delta^*$, it will form two clusters of equal size, say $C_{1,i+1}$,$C_{2,i+1}$. Since $C_{1,i}$ and $C_{1,i+1}$ contains the same set of nodes, removal of $C_{1,i+1}$ in this iteration will result the same tree as that in the $i^{th}$ iteration of $\delta'v_1$. The cost in this iteration is 
$cost_{i+1, \delta*} = (k-1)|C_i|$. 
So, the total cost of $\delta^*$ in $i^{th}$ and $(i+1)^{th}$ iteration is $3|C_{1,i}|(k-1/2)$, which is same as that of $\delta'v_2$ in $i^{th}$ iteration. \\

The final expressions obtained for each of these measures for $B(d)$: a full binary tree of depth $d$ are shown below.\\
$\delta^* = \delta'v_2 -\dfrac{(m+1)}{6}-\sqrt{\dfrac{m}{8}}-\dfrac{1}{2}$~~~~~~~~ //if d is odd\\\\
$\delta^* = \delta'v_2 -\dfrac{(m+2)}{6}-\sqrt{\dfrac{m}{4}} $ ~~~~~~~~~~~~~~//if d is even\\\\
$\delta^* = 4md - 6m + O(d^2) - \dfrac{(m+4)}{6} ;$~~~~//if $d$ is odd\\\\
$\delta^* = 4md - 6m + O(d^2) - \dfrac{(m-4)}{6} ;$~~~~//if $d$ is even\\\\
$\delta'_{v_1} = 4md - 6m + O(d^2)$;\\\\
$\delta'_{v_2} = 4md - 6m + O(d^2) + \sqrt{\dfrac{m}{8}} ;$~~~~~~~~//if $d$ is odd\\\\
$\delta'_{v_2} = 4md - 6m + O(d^2) + 1 + \sqrt{\dfrac{m}{4}} ;$~~~//if $d$ is even\\\\

The above expressions show that the gain for $\delta^*$ over the other two algorithms for a $B(d)$ is approximately $m/6$ where $m$ is the number of leaf nodes and the total number of nodes is $n$. However, $m=(n+1)/2$. Thus, gain is approximately $n/12$. The theorem stated below immediately follows. 
\begin{Theorem} \label{thm:d*better2}
$\delta'_{v_1}-\delta^* = \Omega(n)$ for a full binary tree.
\end{Theorem}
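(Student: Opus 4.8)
The statement is essentially a corollary of the quantitative analysis carried out above for $B(d)$: once the closed forms relating $\delta^*$ and $\delta'_{v_1}$ on a full binary tree are in hand, linearity in $n$ is immediate. So the plan is (i) to justify that $\delta^*$ and $\delta'_{v_1}$ proceed in lock-step on $B(d)$, (ii) to read off the total saving of $\delta^*$ as a sum of per-iteration savings, and (iii) to substitute $m=(n+1)/2$.

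First I would establish the lock-step property. In every iteration of a run on $B(d)$ the largest retained cluster $C^*$ satisfies $|C^*|\le\tfrac12|S|<\tfrac23|S|$, so neither algorithm ever enters its ``remove all of $S$'' branch: both delete exactly $S\setminus C^*$. By the symmetry of $B(d)$ and the isomorphism arguments of Section~4 (Figures~1--4), whichever of the two maximum-size symmetric clusters is kept, the residual tree is the same up to isomorphism; hence the two runs pass through the same sequence of trees, with the same diameters, and $\delta^*$ is deterministic. Consequently the two runs differ only in the \emph{cost charged}, and only in the odd iterations of phase B (and in the treatment of phase A and of the terminal $K_{1,3}$): there $\delta'_{v_1}$ charges $\diam(T)$ per node for the second deleted cluster $C_3$, whereas Case~2 of $\delta^*$ charges $\diam(T)-\tfrac12$ for those $|C_3|$ nodes, plus an occasional extra $-\tfrac12$ when $|C|-|C^*|$ is odd.

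Next I would sum these savings. By Table~5 the per-iteration saving equals $m/2^4,\,3m/2^6,\,7m/2^8,\dots$ in the early odd iterations and $\dots,8,4,2,1$ in the late ones; adding the two geometric-type series — exactly the bookkeeping already performed for $P,Q,R$ and for the quantities $A$ and $C$ — gives
\[
\delta'_{v_1}-\delta^* \;=\; \frac{m+1}{6}+\frac32 \quad(d\text{ odd}), \qquad
\delta'_{v_1}-\delta^* \;=\; \frac{m+2}{6}+1 \quad(d\text{ even}),
\]
where $m$ is the number of leaves of $B(d)$. Since a full binary tree of depth $d$ has $n=2m-1$ vertices, i.e. $m=(n+1)/2$, each of these expressions is $\tfrac{n}{12}+O(1)$, whence $\delta'_{v_1}-\delta^*=\Omega(n)$.

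The step that needs the most care is the lock-step claim: one must check that the tie-breaking in step~S4 (``$distSum(C^*)$ least, break ties arbitrarily'') never lets the two algorithms diverge, which is precisely the determinism of $\delta^*$ together with the isomorphism of the residual trees depicted in Figures~1--4. Everything after that — matching charged costs iteration by iteration and evaluating the resulting finite series — is the routine computation already tabulated in Tables~3--5.
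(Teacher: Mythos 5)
Your proposal follows the paper's own route: both establish that $\delta'_{v_1}$ and $\delta^*$ delete the same vertex sets on $B(d)$ and produce isomorphic residual trees, both read the per-iteration savings off Table~5 and sum them to get $\delta'_{v_1}-\delta^*=\frac{m+1}{6}+\frac{3}{2}$ ($d$ odd) or $\frac{m+2}{6}+1$ ($d$ even), and both conclude via $m=(n+1)/2$ that the gain is about $n/12$, hence $\Omega(n)$. The argument and the closed forms match the paper's, so this is essentially the same proof.
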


\section{Conclusion}
We design a new measure $\delta^*$. It is shown to yield the smallest cumulative sum for all trees of a given cardinality where $|V_T| \leq  15$. This is an improvement over the previously known best upper bound $\delta'$. For a full binary tree we show that $\delta^*$ is tighter than $\delta'$. In addition to full binary trees, were able to theoretically show that $\delta^*$ is tighter than $\delta'$ for several other tree classes.

\section{Acknowledgements}
Authors thank Amma for the direction.

\end{document}